\newcommand{\changemargins}{
    \newgeometry{top=54pt, bottom=54pt, left=54pt, right=54pt}
}
\let\oldnl\nl
\newcommand{\nonl}{\renewcommand{\nl}{\let\nl\oldnl}}
\newcommand{\specialcell}[1]{\begin{tabular}{@{}c@{}}#1\end{tabular}}
\def\BibTeX{{\rm B\kern-.05em{\sc i\kern-.025em b}\kern-.08em
    T\kern-.1667em\lower.7ex\hbox{E}\kern-.125emX}}
\newcommand{\bm}{\boldsymbol}
\newcommand{\mc}[1]{\mathcal{#1}}
\newtheorem{theorem}{Theorem}
\newtheorem{lemma}{Lemma}
\newtheorem{proposition}{Proposition}
\newtheorem{remark}{Remark}
\begin{document}
\title{Distributionally Robust Path Integral Control}
\author{\IEEEauthorblockN{Hyuk Park\textsuperscript{*}
\thanks{*Department of Industrial and Enterprise Systems Engineering, University of Illinois Urbana-Champaign. 
\texttt{\{hyukp2, duozhou2, gah\}}\texttt{@illinois.}\texttt{edu}
}
}
\and
\IEEEauthorblockN{Duo Zhou\textsuperscript{*}}
\and
\IEEEauthorblockN{Grani A. Hanasusanto\textsuperscript{*}}
\and
\IEEEauthorblockN{Takashi Tanaka\textsuperscript{\textdagger}
\thanks{\textdagger Department of Aerospace Engineering and Engineering Mechanics, University of Texas at Austin. 
\texttt{ttanaka@utexas.edu}}   
}
}

\maketitle
\begin{abstract}
We consider a continuous-time continuous-space stochastic optimal control problem, where the controller lacks exact knowledge of the underlying diffusion process, relying instead on a finite set of historical disturbance trajectories.
In situations where data collection is limited, the controller synthesized from empirical data may exhibit poor performance.
To address this issue, we introduce a novel approach named Distributionally Robust Path Integral (DRPI). The proposed method employs distributionally robust optimization (DRO) to robustify the resulting policy against the unknown diffusion process. Notably, the DRPI scheme shows similarities with risk-sensitive control, which enables us to utilize the path integral control (PIC) framework as an efficient solution scheme. We derive theoretical performance guarantees for the DRPI scheme, which closely aligns with selecting a risk parameter in risk-sensitive control. We validate the efficacy of our scheme and showcase its superiority when compared to risk-neutral PIC policies in the absence of the true diffusion process.
\end{abstract}

\begin{IEEEkeywords}
stochastic optimal control, distributionally robust optimization, path integral method, risk-sensitive control
\end{IEEEkeywords}

\section{Introduction}
We consider a continuous-time and continuous-space stochastic control problem where the explicit representation of the system dynamics is unknown to the controller. In numerous real-world applications, the absence of true system dynamics is a common issue, as the system dynamics may be too complex to model, or the collection of historical data for the true dynamics may be limited. One approach to address unknown dynamics is constructing a simulator based on available data to serve as a proxy for the true environment and to test control policies in the simulator before real-world deployment.

However, when the available data is sparse, the simulator may not accurately capture the true characteristics of the system dynamics. In other words, the probability distribution of the disturbance in the simulator model may not faithfully represent the distribution of the true disturbance in the real-world system. Such a distributional mismatch may cause the simulator-based control policy to exhibit poor performance in the true system. To address this issue, we propose a distributionally robust (DR) control problem, using the emerging distributionally robust optimization (DRO) paradigm. 
The proposed approach constructs an ambiguity set that contains all possible distributions from the nominal distribution within a certain distance. In the literature, various measures have been used to define the distance between two distributions~\cite{bayraksan2015data},~\cite{delage2010distributionally},~\cite{mehrotra2014models}. In this paper, we utilize the Kullback-Leibler (KL) divergence~\cite{kullback1951information} for the ambiguity set. Subsequently, the DR control problem seeks a policy that performs optimally under the worst-case disturbance distribution taken from the KL divergence-based ambiguity set. 
Interestingly, our DR control problem shows similarities with the traditional risk-sensitive control problem~\cite{whittle1990risk}. While several papers have drawn a connection between risk-sensitive control and DR control in discrete-time settings~\cite{petersen2000minimax},~\cite{nishimura2021rat}, there has not been prior work demonstrating this equivalence in continuous-time settings.

 The similarities between the two problems allow us to efficiently solve the DR control problem using the path integral control (PIC) method.
 PIC has emerged as a promising stochastic optimal control framework for designing feedback control systems in the literature~\cite{kappen2005path},~\cite{theodorou2010generalized}. It has recently found applications in diverse robotics domains, ranging from tasks such as aggressive driving~\cite{williams2016aggressive},~\cite{williams2017information}, off-road navigation~\cite{cai2022probabilistic} to multi-quadrotor control~\cite{williams2017model}, and more~\cite{patil2022chance},~\cite{patil2023simulator}. The key concept of PIC involves using the Feynman-Kac lemma~\cite{friedman1975stochastic} to convert the value function of the stochastic control problem to an expectation on all possible uncontrolled trajectories. This transformation allows for the approximation of the optimal control input using sample trajectories of uncontrolled system dynamics generated through the Monte Carlo method.
 
 PIC offers several advantages over traditional optimal control techniques. One notable advantage is its capability to handle arbitrary nonlinear state-dependent cost functions and state-dependent dynamics. This flexibility makes PIC well-suited for systems with complex behaviors that are difficult to capture using traditional methods. Moreover, it is less susceptible to high-dimensional state spaces or a long time horizon compared to other gradient-based methods. In practice, to expedite the computation of the optimal control, PIC can also harness modern graphics processing units (GPU) which can generate a large number of sample trajectories in parallel~\cite{williams2017model}. This parallelization significantly speeds up the computation process, enhancing the practical applicability of PIC for complex systems. 
 
 Recently, several works have proposed robust model predictive path integral control frameworks utilizing diverse methods, including covariance steering~\cite{yin2022trajectory}, conditional value-at-risk~\cite{yin2023risk}, control barrier functions~\cite{yin2023shield}, and robust sampling~\cite{williams2018robust},~\cite{gandhi2021robust}. However, to the best of our knowledge, our work is the first to combine PIC with the DRO methodology. The contributions of this paper are as follows: 
 \begin{enumerate}
     \item We formulate a continuous-time and continuous-space stochastic control problem as the DR control problem and establish its equivalence with risk-sensitive control and H-infinity control.
     \item We provide finite sample guarantees for the DR control policy under certain conditions. Our theoretical results serve as valuable guidance for the selection of the robust parameter.
     \item  In pursuit of an efficient solution scheme for the DR control problem, we propose a path integral-based algorithm and demonstrate its superior performance via numerical experiments in an online context.
 \end{enumerate}

The remaining sections of this paper are organized as follows. In Section~\ref{sec:problem_statement}, we define the stochastic control problem of our interest. In Section~\ref{sec:DRO}, we present our DR control problem and its reformulation as a single-level minimization problem. In Section~\ref{sec:data_driven}, we introduce the data-driven approach to the DR control problem and analyze the theoretical performance guarantees. In Section~\ref{sec:solution}, the solution scheme based on path integral control is presented. In Section~\ref{sec:experiment}, we present numerical experiments to demonstrate the effectiveness of our scheme. 
We conclude the paper with Section~\ref{sec:conclusion}, which summarizes the contributions of this work and suggests avenues for future research.

\subsection*{Notation}
Bold letters represent vectors and matrices, while regular fonts indicate scalars. 
the identity matrix is denoted as $\boldsymbol{\mathrm{I}}$—its dimension will be evident from the context. The partial derivatives with respect to the state $\bm{x}$ and time $t$ are denoted by $\partial_x$ and $\partial_t$, respectively. The trace operation of a square matrix $\bm{A}$ is denoted as $\operatorname{tr}(A)$. 
$N$ sequences, each containing $K$ vectors, are denoted as $\{\bm x^{(i)}(k)\}^K_{k=1}$, $\forall i=1,2,\ldots,N$.

\section{Problem Statement}\label{sec:problem_statement}
We consider a continuous-time and continuous-space stochastic dynamic $\bm x(t)\in\mathbb{R}^{n}$ that is affine in control and disturbance as follows:  
\begin{align}\label{eq:dynamic}
    \begin{split}
    d\bm x(t) = & f\left(\bm x(t),t\right)dt + \bm G\left(\bm x(t),t\right)
    \bm u(\bm x(t),t)dt + \\  & \bm \Sigma(\bm x(t), t) d\bm \xi(t).
    \end{split}
\end{align}
Here, $f\left(\bm x(t),t\right)\in \mathbb{R}^n$ is an arbitrary passive dynamic, $\bm u(\bm x(t),t)\in \mathbb{R}^{k}$ is a control input, $\bm G(\bm x(t),t)\in\mathbb{R}^{n\times k}$ is a full-rank control transition matrix function with $n\geq k$.
In addition, $\bm \xi(t) \in \mathbb{R}^{p}$ is a diffusion process on a suitable probability space $(\Omega, \mathcal{F}, \mathbb{P})$, and $\bm \Sigma(\bm x(t), t) \in \mathbb{R}^{n\times p}$ is a full-rank diffusion matrix function that maps the disturbance to the state. We define the diffusion process $\bm \xi(t)$ adapted to the filtration $\mathcal{F}_t$ as 
\begin{equation}\label{eq:diffusion_process}
d\bm \xi(t) = \bm \mu(t) dt + d\bm w.
\end{equation}
Here, $\bm \mu(t)$ represents the drift, and $d\bm w=\{\boldsymbol{\mathrm{w}}(t): t\geq 0\}$ is a standard Brownian disturbance with respect to the probability law $\mathbb{P}^\star$.
In this paper, we make the assumption that the controller lacks access to the true diffusion process \eqref{eq:diffusion_process} and $\mathbb{P}^\star$.
The assumption aligns with many real-world applications where accurately characterizing the actual disturbance is challenging.

Given a finite time horizon $t\in[0,T]$, the initial state $\bm x(0)$, a running cost function $\mathcal{L}_{\bm u}(\bm x(t), t)$, and a terminal cost function $\psi(\bm x(T), T)$, the stochastic optimal control problem is formulated as 
\begin{equation}\label{eq:true_cost}
\inf _{\bm u} 
\mathbb{E}_{\mathbb{P}^\star}
\Big[ 
\mc J_{\bm u} \left( \bm x(0), 0 \right)
\Big], 
\end{equation}
where $\mc J_{\bm u}(\cdot)=\psi\left(\bm x(T), T\right)+\int_{0}^T \mathcal{L}_{\bm u}\left(\bm x(s), s\right) d s$.
Here, the dependence of $\mc L_{\bm u}(\cdot)$ and $\mc J_{\bm u}(\cdot)$ on $\bm u$ implies a certain policy $\bm u(\bm x(t), t)$ imposed for evaluating the costs and $\mathbb{E}_{\mathbb{P}^\star}[\cdot]$ is the expectation evaluated under $\mathbb{P}^\star$. 
We assume that the running cost function $\mathcal{L}_{\bm u}(\cdot)$ consists of an arbitrary state-dependent cost $q(\cdot)$ and a quadratic control cost with a positive definite weight matrix $\bm R\in\mathbb{R}^{p\times p}$, given by
\begin{align}\label{eq:running_cost}
\mathcal{L}_{\bm u}\left(\bm x(t), t\right)=q\left(\bm x(t), t\right)+ \frac{1}{2}\bm u(\bm x(t), t)^{\top} \bm R \bm u(\bm x(t), t).
\end{align}

\section{Distributionally Robust Control Problem}\label{sec:DRO}
Designing the optimal policy for the stochastic control problem \eqref{eq:true_cost} clearly requires knowledge of the true diffusion process \eqref{eq:diffusion_process}, which is unrealistic in many cases. Instead, we consider a scenario where the controller has access to an alternative nominal diffusion process $\widehat{\bm \xi}(t)$, denoted as $d\widehat{\bm \xi}(t) = \widehat{\bm \mu}(t) dt + d\bm w$, where ${\bm w}$ is a Brownian disturbance under the probability law $\mathbb{Q}$. In this case, a na\"ive approach to designing a control policy might involve constructing the optimal controller  based on the dynamic system where the true diffusion process \eqref{eq:diffusion_process} is replaced by the nominal diffusion process. Nonetheless, this approach leads to suboptimal performance if the nominal diffusion process fails to accurately represent the true system dynamics. In fact, this issue is common 
 for simulator-based control policies, as the policies synthesized in erroneous simulator models often exhibit inferior performance when implemented in the actual system. 

To address this issue, we adopt the emerging paradigm of distributionally robust optimization (DRO) to formulate the distributionally robust (DR) control problem for the true problem~\eqref{eq:true_cost} as follows:
\begin{align}\label{eq:dr_control}
\inf _{\bm u} \sup_{\mathbb{P}\in{P}^{\gamma}\left(\mathbb{Q}\right)} 
\mathbb{E}_\mathbb{P}\Big[
\mc J_{\bm u} \left( \bm x(0), 0 \right)
\Big].
\end{align}
Here, the distributional (ambiguity) set~${P}^{\gamma}(\mathbb{Q})$ with robustness parameter~$\gamma>0$ is defined as
\begin{align}\label{eq:ambiguity_set}
\begin{split}
    {P}^{\gamma}\left(\mathbb{Q}\right) = \bigg\{&\mathbb{P}\in D\; :\; 
\mathbb{D}\left(\mathbb{P}\|\mathbb{Q}\right)=\\& \int_{\Xi}\log \frac{d\mathbb{P}}{d\mathbb{Q}}(\bm \xi) d\mathbb{P}(\bm \xi) \leq \gamma 
\bigg\}.
\end{split}
\end{align}
Here, ${D}$ denotes the set of all probability laws of $\bm \xi(\cdot)$ and $\mathbb{D}(\mathbb{P}\|\mathbb{Q})$ denotes the Kullback-Leibler (KL) divergence from $\mathbb{P}$ to $\mathbb{Q}$ where $d\mathbb{P}/d\mathbb{Q}$ is the likelihood ratio between $\mathbb{P}$ and $\mathbb{Q}$, also known as the Radon-Nikodym derivative. 

The DR control problem \eqref{eq:dr_control} seeks a policy that performs best under the worst-case probability law $\mathbb{P}$  within the ambiguity set, thus providing robustness against the unknown true diffusion process \eqref{eq:diffusion_process}. 
Compared to classical robust control, which is designed to optimize against worst-case disturbances, the DR control policy is less conservative, resulting in better performance across various applications, including robotics~\cite{nishimura2021rat}, control design~\cite{van2015distributionally}, and power systems~\cite{kim2020minimax}, etc. Furthermore, the DR control framework is well-suited for data-driven settings where a nominal diffusion process can be constructed based on available data. We will discuss how to construct the DR control policy in a data-driven manner in Section~\ref{sec:data_driven}.

\subsection*{Tractable Reformulation}\label{sec:DRO_A}
The min-max problem~\eqref{eq:dr_control} is inherently difficult to solve since the cost function involves a maximization problem. To design an efficient solution scheme, we introduce an equivalent single-level problem for the DR control problem~\eqref{eq:dr_control} by following the standard results of the convex analysis in~\cite{luenberger1997optimization} and duality between relative entropy and free energy in~\cite{petersen2000minimax},~\cite{dupuis2011weak}. 
We first make the following assumption 
\begin{equation}\label{assumption}
\sup_{\mathbb{P}\in D} 
\mathbb{E}_\mathbb{P}\Big[
\mc J_{\bm u} \left( \bm x(0), 0 \right)
\Big]=\infty.
\end{equation}
As discussed in~\cite{petersen2000minimax}, this assumption states that, without the KL divergence constraint, some arbitrary diffusion process can drive the expected cost to infinity. With this assumption,
we present the single-level reformulation for the DR control problem in the following lemma.
\begin{lemma}
For any given $\gamma >0$, the DR control problem~\eqref{eq:dr_control} can be equivalently reformulated as the following single-level minimization problem
\begin{align}\label{eq:dr_control_reformulation}
     \inf _{\bm u,\theta>0}  {\gamma}{\theta} + 
{\theta}\log\mathbb{E}_{\mathbb{Q}}\left[\exp\left(\frac{1}{\theta}
\mc J_{\bm u} \left( \bm x(0), 0 \right) \right)
\right].
\end{align}
\label{lem:reformulation}
\end{lemma}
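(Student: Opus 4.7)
\bigskip
\noindent\textbf{Proof proposal for Lemma~\ref{lem:reformulation}.}

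The plan is to fix the control $\bm u$ and attack the inner supremum problem over $\mathbb{P}\in {P}^\gamma(\mathbb{Q})$ by Lagrangian duality, then take the infimum over $\bm u$ outside. For fixed $\bm u$, writing $J := \mc J_{\bm u}(\bm x(0),0)$ for brevity, the inner problem is the KL-constrained program
\begin{equation*}
\sup_{\mathbb{P}\in D}\; \mathbb{E}_\mathbb{P}[J] \quad \text{s.t.}\quad \mathbb{D}(\mathbb{P}\|\mathbb{Q})\le \gamma.
\end{equation*}
The objective is linear in $\mathbb{P}$ and the feasible set is convex (KL divergence is jointly convex, hence convex in $\mathbb{P}$); moreover $\mathbb{P}=\mathbb{Q}$ is strictly feasible since $\gamma>0$, giving a Slater point. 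First I would invoke strong Lagrangian duality in this infinite-dimensional convex setting (see, e.g., \cite{luenberger1997optimization}) to rewrite the inner problem as
\begin{equation*}
\inf_{\theta\ge 0}\; \gamma\theta + \sup_{\mathbb{P}\in D}\Big\{\mathbb{E}_\mathbb{P}[J]-\theta\,\mathbb{D}(\mathbb{P}\|\mathbb{Q})\Big\}.
\end{equation*}

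Next I would evaluate the inner supremum using the Donsker--Varadhan variational representation of relative entropy, which is the well-known duality between free energy and relative entropy used in \cite{petersen2000minimax},\cite{dupuis2011weak}: for any $\theta>0$,
\begin{equation*}
\sup_{\mathbb{P}\in D}\Big\{\mathbb{E}_\mathbb{P}[J]-\theta\,\mathbb{D}(\mathbb{P}\|\mathbb{Q})\Big\}=\theta\log \mathbb{E}_\mathbb{Q}\!\left[\exp\!\left(\tfrac{1}{\theta} J\right)\right],
\end{equation*}
with the supremum attained (formally) at the Gibbs-type tilt $d\mathbb{P}^\star/d\mathbb{Q}\propto \exp(J/\theta)$. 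This can either be cited directly or derived in one line by noting that $\mathbb{E}_\mathbb{P}[J]-\theta\,\mathbb{D}(\mathbb{P}\|\mathbb{Q})=\theta\log \mathbb{E}_\mathbb{Q}[e^{J/\theta}]-\theta\,\mathbb{D}(\mathbb{P}\|\mathbb{P}^\star)$, from which the maximizer and maximum value are immediate.

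It then remains to rule out the boundary case $\theta=0$: at $\theta=0$ the Lagrangian reduces to $\sup_{\mathbb{P}\in D}\mathbb{E}_\mathbb{P}[J]$, which by Assumption~\eqref{assumption} equals $+\infty$, so the infimum over $\theta\ge 0$ is achieved (or approached) strictly inside $(0,\infty)$ and can be restricted to $\theta>0$. Substituting the free-energy expression and finally pulling $\inf_{\bm u}$ outside $\inf_{\theta>0}$ yields
\begin{equation*}
\inf_{\bm u,\theta>0}\; \gamma\theta+\theta\log \mathbb{E}_\mathbb{Q}\!\left[\exp\!\left(\tfrac{1}{\theta}\mc J_{\bm u}(\bm x(0),0)\right)\right],
\end{equation*}
which is exactly \eqref{eq:dr_control_reformulation}.

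The main obstacle I anticipate is the strong-duality step: the optimization is over the infinite-dimensional space of probability laws on path space $\Xi$, so Slater's condition and the measurability/integrability prerequisites need to be checked carefully so that there is no duality gap and the Donsker--Varadhan identity is applicable (in particular, $\mathbb{E}_\mathbb{Q}[e^{J/\theta}]<\infty$ for the relevant $\theta$). The convexity of the KL ball, strict feasibility at $\mathbb{Q}$, and Assumption~\eqref{assumption} together provide the ingredients to close this gap and justify restricting to $\theta>0$.
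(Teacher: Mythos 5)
Your proposal is correct and follows essentially the same route as the paper's proof: Lagrangian dualization of the KL-constrained inner supremum, strong duality via convexity and the Slater point $\mathbb{P}=\mathbb{Q}$ (citing \cite{luenberger1997optimization}), exclusion of $\theta=0$ via Assumption~\eqref{assumption}, and the Legendre/Donsker--Varadhan duality between relative entropy and free energy from \cite{dupuis2011weak}. The only addition is your explicit Gibbs-tilt derivation of the free-energy identity, which the paper simply cites.
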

\begin{proof}
Dualizing the inner maximization problem in \eqref{eq:dr_control} with the ambiguity set \eqref{eq:ambiguity_set} as a constraint, we have
\begin{align}
\small
&\sup_{\mathbb{P}\in D} \; 
\inf _{\theta \geq 0} \;\mathbb{E}_{\mathbb{P}}\Big[
\mc J_{\bm u} \left( \cdot \right)
\Big] - {\theta}\mathbb{D}\left(\mathbb{P}\|\mathbb{Q}\right)
+ {\gamma}{\theta}
\label{eq:primal_problem}
\\ 
&\leq \inf _{\theta \geq 0} {\gamma}{\theta} +  \sup_{\mathbb{P}\in D} \;
\mathbb{E}_{\mathbb{P}}\Big[
\mc J_{\bm u} \left( \cdot \right)
\Big] - {\theta}\mathbb{D}\left(\mathbb{P}\|\mathbb{Q}\right)
\label{eq:dual_problem}
\\ 
& = \inf _{\theta > 0} {\gamma}{\theta} +  \sup_{\mathbb{P}\in D} \;
\mathbb{E}_{\mathbb{P}}\Big[
\mc J_{\bm u} \left( \cdot \right)
\Big] - {\theta}\mathbb{D}\left(\mathbb{P}\|\mathbb{Q}\right).
\label{eq:reduced_dual_problem}
\end{align}
 Here, the inequality between \eqref{eq:primal_problem} and \eqref{eq:dual_problem} holds due to weak duality, and the equality between \eqref{eq:dual_problem} and \eqref{eq:reduced_dual_problem} holds since the objective function in \eqref{eq:dual_problem} evaluates to infinity for $\theta=0$ by the assumption \eqref{assumption}.
 
 To derive the single-level problem~\eqref{eq:dr_control_reformulation}, we first establish strong duality (i.e., the equality) between \eqref{eq:primal_problem} and \eqref{eq:dual_problem} by following ~\cite[Theorem 1, Chapter 8]{luenberger1997optimization}.
 It is clear that $D$ in \eqref{eq:primal_problem} is a convex set and the objective function in \eqref{eq:primal_problem} is concave in $\mathbb{P}$ since $\mathbb{E}_{\mathbb{P}}[\mc J_{\bm u}(\cdot)]$ is linear in $\mathbb{P}$ and $\mathbb{D}\left(\mathbb{P}\|\mathbb{Q}\right)$ is convex in $\mathbb{P}$. 
 Furthermore, we can show the existence of an interior point $\mathbb{P}\in D$: Let $\mathbb{P}=\mathbb{Q}$, then the following strict inequality holds:
 \begin{equation*}    \mathbb{D}\left(\mathbb{Q}\|\mathbb{Q}\right)= \int_{\Xi}\log \frac{d\mathbb{Q}}{d\mathbb{Q}}(\bm \xi) d\mathbb{P}(\bm \xi)=0 < \gamma\
 \end{equation*}
 for any $\gamma > 0$. 
 Hence, by incorporating the minimization over $\bm u$ with \eqref{eq:reduced_dual_problem}, the DR control problem \eqref{eq:dr_control} becomes equivalent to
 \begin{equation*}    
 \inf _{\bm u,\theta > 0} {\gamma}{\theta} +  \sup_{\mathbb{P}\in D} \;
\mathbb{E}_{\mathbb{P}}\Big[
\mc J_{\bm u} \left( \cdot \right)
\Big] - {\theta}\mathbb{D}\left(\mathbb{P}\|\mathbb{Q}\right).
\label{eq:dr_reformulation_2}
 \end{equation*}

Then, the remainder of the proof amounts to showing 
\begin{align}\label{eq:lagrende_duality}
    \sup_{\mathbb{P}\in D} \;
\mathbb{E}_{\mathbb{P}}\Big[
\mc J_{\bm u} \left( \cdot \right)
\Big] - {\theta}\mathbb{D}\left(\mathbb{P}\|\mathbb{Q}\right)
=
{\theta}\log\mathbb{E}_{\mathbb{Q}}\left[\exp\left(\frac{1}{\theta}
\mc J_{\bm u} \left( \cdot \right) \right)
\right]
\end{align}
for all $\theta > 0$.
This equality \eqref{eq:lagrende_duality}, known as the Legendre duality between the KL divergence and free energy, is already shown in \cite[Section 4.6.3]{dupuis2011weak}.
This completes the proof.
\end{proof}

Note that the reformulation~\eqref{eq:dr_control_reformulation} exhibits similarities with the risk-sensitive control problem~\cite{whittle1990risk}. In fact, if we consider $\theta$ as a parameter, the problem~\eqref{eq:dr_control_reformulation} is equivalent to the risk-sensitive control problem. This shows the intimate relationship between DR control and risk-sensitive control. A low value of the robustness parameter~$\gamma$ in $\eqref{eq:dr_control_reformulation}$ favors a high value of $\theta$. 
In particular, as $\gamma$ converges to 0, the problem becomes equivalent to the risk-neutral control problem. 
Conversely, a higher value of $\gamma$ yields a more risk-averse policy. 

While the connection between risk-sensitive control and DR control has been shown in several papers~\cite{petersen2000minimax},~\cite{nishimura2021rat}, they have focused on discrete-time settings. More importantly, it remains an open question on how to determine the risk-sensitive parameter $\theta$. 
In the literature, the selection of $\theta$ often involves a trial-and-error process where policies with different values of $\theta$ are tested in the actual environment until the resulting policy aligns with the modeler's risk profile. 
However, such experimentation can be costly in safety-critical tasks. 
On the other hand, as will be discussed in Section~\ref{sec:data_driven}, 
our DR control framework offers theoretical performance guarantees in data-driven settings, guiding the selection of the robustness parameter $\gamma$ based on available data. This choice corresponds to an appropriately selected $\theta$ in the risk-sensitive control framework.

\section{Data-Driven Approach}\label{sec:data_driven}
In this section, we explore the application of the DR control framework in a data-driven context. Specially, we discuss the construction of the approximate diffusion process and the corresponding probability law $\mathbb{Q}$ using available data. 
We introduce an additional assumption in this section: the drift in the true diffusion process is deterministic, i.e.,
\begin{equation}\label{eq:true_diffusion_deterministic}
d\bm \xi = \bm \mu dt + d\bm w.     
\end{equation}
While the DR control problem \eqref{eq:dr_control} discussed in Section~\ref{sec:DRO} can accommodate a more general $\mc F_t$-adapted diffusion process \eqref{eq:diffusion_process}, the assumption \eqref{eq:true_diffusion_deterministic} enables us to construct an approximate diffusion process in a data-driven manner and establish theoretical performance guarantees for the DR control policy, as we will discuss in this section.

In the absence of the true diffusion process 
\eqref{eq:true_diffusion_deterministic}, we assume that the controller merely has access to 
$N$ sequences of historical disturbance terms collected over time interval $\Delta t>0$ denoted as $\{\Delta \bm \xi^{(i)}(k\Delta t)\}^K_{k=1}, \; \forall i=1,2,\ldots,N$, where $K=T/\Delta t$ is the number of empirical disturbances for each sequence. 
Subsequently, we can construct the approximate diffusion process
\begin{equation}\label{eq:approx_diffusion}
d\widehat{\bm \xi} = \widehat{\bm \mu}dt+ d\bm w, \text{ where } \widehat{\bm \mu}=\frac{1}{NK\Delta t}\sum_{i=1}^N \sum_{k=1}^{K}\Delta \bm \xi^{(i)}(k\Delta t)
\end{equation}
and $d\bm w$ is a Brownian disturbance under the probability law $\mathbb{Q}$. 
Consequently, we can employ \eqref{eq:approx_diffusion} as a nominal diffusion process in the ambiguity set ${P}^{\gamma}(\mathbb{Q})$ defined in \eqref{eq:ambiguity_set}, rendering a data-driven DR control model.

\subsection*{Performance Guarantees}\label{sec:OOS}
We provide the finite sample guarantee of our DR control policy. The ambiguity set $P^\gamma(\mathbb{Q})$ centered at $\mathbb{Q}$ of the approximate diffusion process \eqref{eq:approx_diffusion} can be viewed as a random object in a sense that different realizations of the empirical disturbance terms $\{\Delta \bm \xi^{(i)}(k\Delta t)\}^K_{k=1}, \; \forall i=1,2,\ldots,N$, may result in a different approximate diffusion process. 
Intuitively, more data would provide a more reliable estimate of the true diffusion process \eqref{eq:true_diffusion_deterministic}. We
can establish the following generalization bound by leveraging measure concentration theory and Girsanov's theorem~\cite{oksendal2013stochastic}.
\begin{proposition}\label{lem:UBS}
  Consider $\mathbb{P}^{\star}$ and $\mathbb{Q}$ as the probability laws of the true and approximate diffusion processes, given by \eqref{eq:true_diffusion_deterministic} and \eqref{eq:approx_diffusion}, respectively. 
  Then, for any given value of the robust parameter $\gamma$ 
, we have
 \begin{equation} \label{eq:OOS_bound} 
\mathbb{P}^{\star}\in{P}^{\gamma}(\mathbb{Q}) \text{ w.p. at least } 1-2 p \exp \left(-\frac{\gamma N}{\sqrt{p}}\right),
\end{equation}
where $p$ is the dimension of the diffusion process.
\label{lem:UB}
\end{proposition}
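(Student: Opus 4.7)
The plan is to combine Girsanov's theorem with a Gaussian concentration bound on the empirical drift $\widehat{\bm\mu}$. The key structural observation is that, under both $\mathbb{P}^\star$ and $\mathbb{Q}$, the process $\bm\xi(t)$ has the same identity diffusion coefficient and differs only in its (constant) drift---$\bm\mu$ under $\mathbb{P}^\star$ and $\widehat{\bm\mu}$ under $\mathbb{Q}$. Applying Girsanov's theorem, the Radon--Nikodym derivative $d\mathbb{P}^\star/d\mathbb{Q}$ is the standard exponential martingale with integrand $\bm\mu - \widehat{\bm\mu}$. Taking logarithm and then the $\mathbb{P}^\star$-expectation, the stochastic-integral term contributes $\int_0^T \|\bm\mu - \widehat{\bm\mu}\|^2\,dt$ (because, under $\mathbb{P}^\star$, the $\mathbb{Q}$-Brownian motion $\bm w$ acquires the drift $(\bm\mu - \widehat{\bm\mu})\,dt$), and after canceling with the $-\tfrac{1}{2}$ quadratic compensator one is left with
\[
\mathbb{D}\bigl(\mathbb{P}^\star\,\|\,\mathbb{Q}\bigr) \;=\; \tfrac{T}{2}\,\|\bm\mu - \widehat{\bm\mu}\|^2.
\]
Thus the infinite-dimensional set-membership event $\{\mathbb{P}^\star \in P^\gamma(\mathbb{Q})\}$ is exactly the finite-dimensional concentration event $\{\|\bm\mu - \widehat{\bm\mu}\|^2 \le 2\gamma/T\}$.

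Next I would characterize the sampling distribution of $\widehat{\bm\mu}$. Under $\mathbb{P}^\star$, each historical increment $\Delta\bm\xi^{(i)}(k\Delta t)$ equals $\bm\mu\,\Delta t$ plus an i.i.d.\ Gaussian draw from $\mathcal{N}(\bm 0, \Delta t\,\boldsymbol{\mathrm{I}}_p)$. Summing the $NK$ independent increments and dividing by $NK\Delta t = NT$ yields
\[
\widehat{\bm\mu} - \bm\mu \;\sim\; \mathcal{N}\!\left(\bm 0,\; \tfrac{1}{NT}\boldsymbol{\mathrm{I}}_p\right).
\]
The task has therefore been reduced to a purely finite-dimensional one: upper-bound the probability that a centered isotropic Gaussian with per-coordinate variance $1/(NT)$ exceeds the threshold $\sqrt{2\gamma/T}$ in Euclidean norm.

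For the final step I would apply the one-dimensional Gaussian tail inequality $\Pr(|\widehat{\mu}_j - \mu_j| \ge t) \le 2\exp(-NT t^2/2)$ to each of the $p$ coordinates, combine with a norm comparison of the form $\|\bm v\|_2^2 \le p\,\|\bm v\|_\infty^2$ to pass from the $\ell_2$-event to a coordinatewise $\ell_\infty$-event, and then take a union bound over the $p$ coordinates. Once the per-coordinate threshold is calibrated so that the aggregate coordinatewise event implies $\|\widehat{\bm\mu} - \bm\mu\|_2^2 \le 2\gamma/T$, summing the $p$ coordinatewise tail probabilities yields a bound of the claimed form $2p\exp(-\gamma N/\sqrt{p})$.

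The main obstacle I anticipate is exactly this last calibration: pairing the Gaussian tail with the $\ell_\infty$-to-$\ell_2$ comparison so that the exponent $\gamma N/\sqrt{p}$ in the claim emerges cleanly, rather than the more na\"ive $\gamma N/p$ that a straightforward combination would give; this may require a sharper concentration (e.g., a Laurent--Massart-type bound on the chi-squared $NT\|\widehat{\bm\mu}-\bm\mu\|^2$) or a careful choice of the sufficient coordinatewise event. The remaining ingredients---Girsanov's theorem, the Gaussian law of the estimator, and the reduction of the KL event to a norm concentration event---are essentially bookkeeping once this calibration is settled.
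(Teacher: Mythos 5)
Your route is the same as the paper's: Girsanov's theorem to collapse the KL divergence to $\tfrac{T}{2}\|\widehat{\bm\mu}-\bm\mu\|^2$, the observation that the increments are i.i.d.\ $\mathcal{N}(\Delta t\,\bm\mu,\Delta t\,\boldsymbol{\mathrm{I}})$ so that $\widehat{\bm\mu}-\bm\mu\sim\mathcal{N}(\bm 0,\tfrac{1}{NT}\boldsymbol{\mathrm{I}}_p)$, and then a coordinatewise Gaussian tail bound plus a union bound over the $p$ coordinates. (The paper's displayed Girsanov identity has a typo in the stochastic-integral term, but it reaches the same reduction you do.)

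The ``calibration obstacle'' you flag at the end is not something you have overlooked how to fix --- it is a genuine defect in the stated constant, and the paper's own proof does not resolve it either: it simply invokes ``Hoeffding's inequality and union bound'' without exhibiting the per-coordinate threshold. Carrying out your calculation honestly, the sufficient coordinatewise event is $|\widehat{\mu}_j-\mu_j|\le\sqrt{2\gamma/(pT)}$ for all $j$, whose complement has probability at most $2\exp(-\gamma N/p)$ per coordinate, giving $1-2p\exp(-\gamma N/p)$ --- the $\sqrt{p}$ in the exponent of \eqref{eq:OOS_bound} does not emerge. Nor can a sharper chi-squared argument rescue it uniformly in $\gamma$: since $NT\|\widehat{\bm\mu}-\bm\mu\|^2\sim\chi^2_p$ has mean $p$, the event $\{\|\widehat{\bm\mu}-\bm\mu\|^2\le 2\gamma/T\}$ has probability tending to zero whenever $2\gamma N<p$, yet for large $p$ there are parameter choices in that regime (e.g.\ $\gamma N$ slightly above $\sqrt{p}\log(2p)$) for which the claimed lower bound $1-2p\exp(-\gamma N/\sqrt{p})$ is bounded away from zero. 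So you should stop at the $\gamma N/p$ exponent and regard that as the correct version of the proposition; the discrepancy lies in the statement, not in your argument. Everything upstream of that step in your proposal is sound and matches the paper.
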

\begin{proof}
Suppose $\mathbb{P}^{\star}\in{P}^{\gamma}(\mathbb{Q})$ for any given value of $\gamma$. This implies 
\begin{equation}\label{eq:KL_div}
\mathbb{D}\left(\mathbb{P}^{\star}\|\mathbb{Q} \right)=\int_{\Xi}\log \frac{d\mathbb{P}^\star}{d\mathbb{Q}}(\bm \xi) d\mathbb{P}^\star(\bm \xi)\leq \gamma.     
\end{equation}
Applying Girsanov's Theorem ~\cite[ Theorem 8.6.5]{oksendal2013stochastic}, the KL divergence in \eqref{eq:KL_div} is equivalent to 
\begin{equation*}
    \mathbb{D}\left(\mathbb{P}^{\star}\|\mathbb{Q} \right)=\mathbb{E}_{\mathbb{P}^{\star}}\left[-\int_{0}^T \boldsymbol{\mu}^{\top} d \boldsymbol{w}+\frac{1}{2}\int_{0}^T\|\widehat{\boldsymbol{\mu}}-\boldsymbol{\mu}\|^2 d s\right],
\end{equation*}
where $\widehat{\bm \mu}$ is defined in \eqref{eq:approx_diffusion}. Since $\bm w\in\mathbb{R}^p$ is a Brownian disturbance under $\mathbb{P}^\star$, we can rewrite  \eqref{eq:KL_div} as 
\begin{equation}\label{eq:OOS_1}
\frac{T}{2}\|\widehat{\boldsymbol{\mu}}-\boldsymbol{\mu}\|^2\leq \gamma.     
\end{equation}
For any time step size $\Delta t > 0$, we know that the empirical disturbances $\Delta\bm\xi^{(i)}(k\Delta t)$, for all $i=1,2,\ldots,N$, and $k=1,2,\ldots,K$, are i.i.d. Gaussian samples drawn from $\mathcal{N}(\Delta t\bm\mu, \Delta t\boldsymbol{\mathrm{I}})$. Therefore, using Hoeffding's inequality\cite{hoeffding1963probability} and union bound, we have
\begin{equation*}
\begin{aligned}
&\operatorname{Pr}\left[\frac{T}{2}\|\widehat{\boldsymbol{\mu}}-\boldsymbol{\mu}\|^2 \leq \gamma\right] \geq 1-2 p \exp \left(-\frac{\gamma  N  }{\sqrt{p}}\right) .
\end{aligned}
\end{equation*}
This completes the proof.
\end{proof}

Proposition \ref{lem:UB} provides a lower bound on the probability that the unknown $\mathbb{P}^\star$ is contained in $P^\gamma(\mathbb{Q})$ for a fixed $\gamma$. Similarly, for any fixed $\epsilon\in (0,1)$, we can obtain the smallest value of $\gamma(\epsilon)$ that guarantees $\mathbb{P}^{\star}\in{P}^{\gamma}(\mathbb{Q})$ w.p. $1-\epsilon$ : 
by setting $\epsilon=2 p \exp \left(-\gamma N/\sqrt{p}\right)$ in \eqref{eq:OOS_bound}, we can rewrite \eqref{eq:OOS_bound} as 
\begin{equation}
    \mathbb{D}\left(\mathbb{P}^{\star}\|\mathbb{Q} \right)\leq \gamma(\epsilon)=\frac{\sqrt{p}}{N}\log\left(\frac{2p}{\epsilon}\right) 
    \label{eq:finite_sample}
\end{equation}
w.p. at least $1-\epsilon$.

If $\mathbb{P}^\star$ generated by the true stochastic process \eqref{eq:true_diffusion_deterministic} is indeed contained in the ambiguity set, the optimal value of the DR control problem serves as an upper bound on the true cost of implementing the DR control policy in the real environment. Finally, using \eqref{eq:finite_sample}, we propose the following finite sample guarantee for our DR control policy.
\begin{theorem}\label{thm:finite}
Suppose that $\widehat{\mc J}^N$ and $\widehat{\bm u}^N(\bm x(t),t),\; \forall t\in[0,T]$, represent the optimal value and the optimal policy of the DR control problem \eqref{eq:dr_control} with ambiguity set $P^{\gamma(\epsilon)}(\mathbb{Q})$ where we set $\gamma(\epsilon)={\sqrt{p}}/{N}\log\left({2p}/{\epsilon}\right)$ for a fixed value of $\epsilon\in(0,1)$ as in \eqref{eq:finite_sample}. Then,
we have
\begin{equation} \label{eq:guarantee}
\mathbb{E}_{\mathbb{P}^\star}
\Big[ 
\mc J_{\widehat{\bm u}^N} \left( \bm x(0), 0 \right)
\Big]  
     \leq \widehat{\mc J}^N  
\end{equation}
w.p. at least $1-\epsilon$.
\end{theorem}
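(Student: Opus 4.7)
The plan is to carry out a simple two-step argument that chains the measure-concentration guarantee of Proposition~\ref{lem:UB} together with the defining ``worst-case'' property of the DR optimal value. The point is that once the unknown true law $\mathbb{P}^\star$ happens to lie inside the ambiguity set, its expected cost under any admissible policy is automatically dominated by the supremum over that set, and the DR optimal policy is precisely the one that minimizes this supremum.

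First, I would invoke the finite-sample statement~\eqref{eq:finite_sample}, which is an immediate rewriting of Proposition~\ref{lem:UB}: with the choice $\gamma(\epsilon)=\sqrt{p}/N\cdot \log(2p/\epsilon)$, the event $\mathcal{E} := \{\mathbb{P}^\star \in P^{\gamma(\epsilon)}(\mathbb{Q})\}$ occurs with probability at least $1-\epsilon$. This is the only place probabilistic reasoning enters; everything afterwards is deterministic on $\mathcal{E}$.

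Second, I would condition on the event $\mathcal{E}$. On this event, $\mathbb{P}^\star$ is a feasible distribution for the inner supremum of the DR problem~\eqref{eq:dr_control}, so for any fixed policy $\bm u$ we have
\begin{equation*}
\mathbb{E}_{\mathbb{P}^\star}\!\left[\mc J_{\bm u}(\bm x(0),0)\right] \;\leq\; \sup_{\mathbb{P}\in P^{\gamma(\epsilon)}(\mathbb{Q})} \mathbb{E}_{\mathbb{P}}\!\left[\mc J_{\bm u}(\bm x(0),0)\right].
\end{equation*}
Applying this to the particular policy $\bm u = \widehat{\bm u}^N$ and recalling that $\widehat{\bm u}^N$ achieves the outer infimum with value $\widehat{\mc J}^N$, the right-hand side equals exactly $\widehat{\mc J}^N$. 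Combining the two displays yields~\eqref{eq:guarantee} on the event $\mathcal{E}$, and since $\mathrm{Pr}(\mathcal{E}) \geq 1-\epsilon$, the theorem follows.

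There is essentially no hard step: the main (and only) nontrivial ingredient is Proposition~\ref{lem:UB}, which has already been established via Girsanov's theorem and Hoeffding's inequality. The remaining argument is just the standard ``if the truth is in the ambiguity set, the DR value upper-bounds the true expected cost'' observation, and I anticipate writing it in just a few lines. The one small point to be careful about is making sure that the policy $\widehat{\bm u}^N$ is treated as a deterministic function of the data realization, so that when we condition on $\mathcal{E}$ the inequality above applies with $\bm u = \widehat{\bm u}^N$ without additional measurability concerns.
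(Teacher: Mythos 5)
Your proposal is correct and follows exactly the paper's argument: invoke \eqref{eq:finite_sample} to get $\mathbb{P}^\star \in P^{\gamma(\epsilon)}(\mathbb{Q})$ with probability at least $1-\epsilon$, then on that event bound $\mathbb{E}_{\mathbb{P}^\star}[\mc J_{\widehat{\bm u}^N}(\cdot)]$ by the inner supremum evaluated at $\widehat{\bm u}^N$, which equals $\widehat{\mc J}^N$. No gaps; your added care about $\widehat{\bm u}^N$ being a deterministic function of the data is a reasonable (if implicit in the paper) touch.
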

\begin{proof}
The claim immediately holds from \eqref{eq:finite_sample} since
\begin{equation*}
\mathbb{E}_{\mathbb{P}^\star}
\Big[ 
\mc J_{\widehat{\bm u}^N} \left( \bm x(0), 0 \right)
\Big]
     \leq
      \inf _{\bm u} \sup_{\mathbb{P}\in{P}^{\gamma}\left(\mathbb{Q}\right)} 
\mathbb{E}_{\mathbb{P}}
\Big[ 
\mc J_{{\bm u}} \left( \bm x(0), 0 \right)
\Big],
 \end{equation*}
 whenever $\mathbb{P}^\star\in P^\gamma(\mathbb{Q})$.
 \end{proof}
  Theorem \ref{thm:finite} provides valuable guidance for modelers on choosing $\gamma(\epsilon)$ that guarantees a prescribed confidence level $1-\epsilon$ prior to real-world implementation of the control policy. Furthermore, given that $\gamma$ is at most $\mathcal{O}(1/N)$ as shown in \eqref{eq:finite_sample}, we can adjust its value at a rate of $1/N$ as more empirical data becomes available.
  \begin{remark}
  The equivalence between risk-sensitive control and H-infinity control in the linear quadratic Gaussian (LQG) setting was initially noted in \cite{jacobson1973optimal}. However, they highlighted the equivalence between two Riccati equations from the two different controls without any consideration of relative entropy. As demonstrated in the proof of Proposition~\ref{lem:UBS}, Girsanov's theorem transforms the ambiguity set~\eqref{eq:ambiguity_set} into the uncertainty set~\eqref{eq:OOS_1} concerning the unknown drift $\bm \mu$, bounded by the robustness parameter $\gamma$. This implies the equivalence between our DR control and the nonlinear generalization of H-infinity control, hence, expanding upon the earlier observation. 
  \end{remark}

\section{Solution Scheme}\label{sec:solution}
\subsection{Decomposition}
Using the equivalence between the risk-sensitive control and the DR control shown in Section~\ref{sec:DRO_A}, we can decompose the reformulated DR control problem~\eqref{eq:dr_control_reformulation} into two minimization problems as follows: the master problem is
\begin{equation}
    \label{eq:master_prolem}
    \begin{array}{ccll}
    \displaystyle \inf_{\theta > 0} {\gamma}{\theta} + g(\theta),
    \end{array}
\end{equation}
and the subproblem is
\begin{equation}
\label{eq:subproblem}
g\left(\theta\right)  =   \inf _{\bm u}
{\theta}\log\mathbb{E}_{\mathbb{Q}}\left[
\exp\Big(
\frac{1}{\theta} \mc J_{\bm u} \left( \bm x(0), 0 \right)
\Big)
\right].
    \end{equation}
Note that the master problem \eqref{eq:master_prolem} is merely an univariate optimization problem over $\theta>0$ which can be solved by various methods. Then, for a fixed $\theta$, the subproblem \eqref{eq:subproblem} becomes the risk-sensitive control problem where the expectation of the exponentiated cost function $\mc J_{\bm u}(\cdot)$ is evaluated under $\mathbb{Q}$.

\subsection{Risk-Sensitive Path Integral Control}
To efficiently solve the subproblem \eqref{eq:subproblem}, we utilize the risk-sensitive path integral control framework proposed in \cite{broek2012risk}. They demonstrated that the standard (i.e., risk-neutral) path integral method~\cite{kappen2005path} can be generalized to risk-sensitive control under the same assumption. In this section, we briefly restate their main results for clarity and completeness. Further details and derivations can be found in \cite{broek2012risk}.

Solving the risk-sensitive control problem \eqref{eq:subproblem} involves setting up the following second-order partial differential equation (PDE) known as the stochastic Hamilton-Jacobi-Bellman (HJB) equation  
\begin{align}
\label{eq:HJB}
    \begin{split}
        -{\partial_t \mc V^{\theta}(\bm x(t),t)} &= \inf _{\bm u} 
     \Bigg( \partial_x\mc V^{\theta\top}(f+\bm G\bm u+\bm\Sigma\widehat{\bm\mu})  
    \\ & + \frac{1}{2} \operatorname{tr} \left( \partial_{\bm x}^2 \mc V^{\theta}
    \bm \Sigma \bm \Sigma^\top 
    \right) + q + \frac{1}{2}\bm u^{\top} \bm R \bm u
    \\ &  
     + \frac{1}{2\theta} \Big\Vert 
     \bm \Sigma^\top \partial_{\bm x} \mc V^{\theta} 
     \Big\Vert^2 \Bigg) (\bm x(t), t)
    \end{split}
\end{align}
with boundary condition $\mc V^{\theta}(\bm x(T),T)=\psi(\bm x(T), T)$.
Taking derivative with respect to $\bm u$ on the right-hand side in \eqref{eq:HJB}, one can derive the optimal control 
\begin{equation}\label{eq:optimal_control}
\bm u (\bm x(t), t) = - \bm R^{-1} \bm G(\bm x(t),t)^\top {\partial_{\bm x} \mc V^{\theta}(\bm x(t), t)}. 
\end{equation}
Substituting \eqref{eq:optimal_control} into \eqref{eq:HJB}, we have
\begin{align}\label{eq:HJB_2}
\begin{split}
    &-\partial_t \mathcal{V}^{\theta}(\bm{x}(t),t) = \Bigg( \partial_x \mathcal{V}^{\theta\top}(f+\bm{\Sigma}\widehat{\bm{\mu}})  + q \\
    &\quad + \frac{1}{2} \operatorname{tr} \bigg( \partial_{\bm{x}}^2 \mathcal{V}^{\theta}\left(
\frac{1}{\theta}\bm{\Sigma} \bm{\Sigma}^\top - \bm{G}\bm{R}^{-1} \bm{G}^\top \right) \bigg)
 \Bigg) (\bm{x}(t), t).
\end{split}
\end{align}
Note that the HJB equation \eqref{eq:HJB_2} is generally nonlinear in $\mc V^{\theta}(\cdot)$ due to the last term on the right-hand side of \eqref{eq:HJB_2}. The conventional solution method is to solve the HJB equation backward in time over the entire time horizon $[0,T]$ for all $\bm x(t)$. This recursive backward evaluation suffers from the curse of dimensionality, becoming intractable as the dimension of the state space increases. 

Risk-sensitive path integral control can be used as an alternative solution approach to the backward recursion for the special case of \eqref{eq:HJB_2} where there exists ${\theta}^*$ that satisfies the equation
\begin{align}\label{eq:lambda_equality}
\theta^*\bm G(\bm x,t)\bm R^{-1}\bm G(\bm x,t)^\top=\bm \Sigma(\bm x,t)\bm \Sigma(\bm x,t)^{\top}.
\end{align}
Note that the equality \eqref{eq:lambda_equality} implies that the HJB equation \eqref{eq:HJB_2} becomes linearizable since the nonlinear term in $\mc V^{\theta}(\cdot)$ disappears when $\theta^*$ is used in \eqref{eq:HJB_2}. 
In a one-dimensional case, \eqref{eq:lambda_equality} holds trivially, while, in a higher dimensional space, it may impose constraints on the choice of the control dependent cost matrix $\bm R$.  Therefore, in this paper, we make the assumption that $\theta$ satisfying \eqref{eq:lambda_equality} always exists.
 If a chosen $\theta$ for the subproblem \eqref{eq:subproblem} satisfies \eqref{eq:lambda_equality}, the HJB equation is immediately linear. 
 
 If a chosen $\theta$ does not satisfy \eqref{eq:lambda_equality}, we can utilize a log transformation of the value function, defined as:
\begin{equation}
\label{eq:exp_transformation}
\mc V^{\theta}(\bm x(t),t) = -\frac{1}{\theta'} \log\left(\Psi(\bm x(t), t)\right),
\end{equation}
where $\theta'=\theta/(\theta^*\theta-1)$.
This transformation renders the HJB equation linear in terms of $\Psi(\cdot)$. Subsequently, relying on the well-known Feynman-Kac lemma~\cite{friedman1975stochastic}, we can derive the solution to the linearized HJB equation using the path integral control framework with the dynamic \eqref{eq:dynamic} being affine in control and disturbance, and the quadratic control cost in \eqref{eq:true_cost} as defined earlier.
This framework enables the computation of the value function using all possible forward trajectories of the \textit{uncontrolled} dynamics, i.e., $ d\bm x(t) = f(\bm x(t),t)dt + \bm \Sigma(\bm x(t),t)d\bm \xi(t)$, as follows:
\begin{align}
\label{eq:forward_trajectory}
\begin{split}
    \mc V^{\theta}(\bm x(t),&t) = {\theta'} \log \mathbb{E}_{\mathbb{Q}}\left[
\exp\left(\frac{1}{\theta'} \mc J_{\bm 0} \left(\bm x(t), t \right) \right)
\right],
\end{split}
\end{align}
where $\mc J_{\bm 0} \left(\cdot\right)=\psi\left(\bm x(T)\right) +
\int_{t}^T q\left(\bm x(s)\right)ds$ represents the cost of an uncontrolled trajectory.
Furthermore, as demonstrated in~\cite{theodorou2015nonlinear}, taking the derivative of $\Psi(\cdot)$ with respect to $\bm x$ yields the optimal control for the problem \eqref{eq:true_cost} at time $t$, as follows:
\begin{align}
\label{eq:optimal_control_cont}
\begin{split}
    &\bm u\left(\bm x(t),t\right)dt=\\ &{\textstyle \bm R^{-1}\bm G_{c}^\top (\bm G_{c} \bm R^{-1} \bm G_{c}^\top)^{-1}\left(\bm x(t), t\right)\frac{\mathbb{E}_{\mathbb{Q}}\left[\exp\left({\lambda'}\mc J_{\bm 0}  \right)\bm \Sigma_{c} d\bm w(t)  \right]}{\mathbb{E}_{\mathbb{Q}}\left[\exp\left({\lambda'} \mc J_{\bm 0} \right)\right]}.}
\end{split}
\end{align}
Here, $\bm G_{c}(\cdot)\in\mathbb{R}^{(n-l)\times p}$ and $\bm \Sigma_{c}(\cdot)\in\mathbb{R}^{(n-l)\times k}$ represent submatrices of the control transition and diffusion matrices $\bm G(\cdot)$ and $\bm \Sigma(\cdot)$, respectively. These submatrices correspond to the directly actuated states denoted as $\bm x_c(t)\in\mathbb{R}^l$ within the state vector $\bm x(t)=[\bm x_c(t), \;\bm x_p(t)]^\top$ where $l\leq n$ without loss of generality. The remaining part of the states, $\bm x_p(t)$, represents non-directly actuated states.
\begin{algorithm}[!ht]
\linespread{1}\selectfont
\DontPrintSemicolon
\SetKwInOut{Init}{Initialization}
\KwIn{ $\bm x(0)$: Initial state;\\
    $f(\cdot),\bm G(\cdot),\bm \Sigma(\cdot)$: System dynamics;\\
    $\bm G_c(\cdot)/\bm \Sigma_c(\cdot)$: Submatrix for control transition/diffusion matrix;\\
    $q(\cdot),\psi(\cdot)/\bm R$: State/Control cost;\\
    $\widehat{\bm \mu}(\cdot)$: Estimated drift term;\\
    $\gamma$: Robustness parameter;\\
    $M$: Number of sample trajectories;\\
    $\Delta t$: Step size}
\KwOut{Control input ${\bm u}(k)$ for $k=0,1\ldots,K-1$}
\SetKwBlock{Begin}{function}{end function}
\For{$k \gets 0$ to $K-1$}
{ Sample $M$ trajectories of disturbance $\{\bm \varepsilon^{(i)}(\tau)\}^M_{i=1}$ for $\tau=k,k+1,\ldots,K-1$  \\
\For {$ i\gets 1$ to $M$}{
    Initialize the $i$-th cost $\mc J^{(i)}_{\bm 0} \gets 0$\\
  \For{$ k'\gets k$ to $K-1$} {
  $\bm x^{(i)}(k'+1)=\bm x^{(i)}(k') + f(\bm x^{(i)}(k'),k')\Delta t
  +  \bm \Sigma (\bm x^{(i)}(k'),k') ( \widehat{\bm \mu}(\bm x^{(i)}(k'),k')\Delta t 
  + \bm \varepsilon^{(i)}(k')\sqrt{\Delta t}$ ) \\
    \If{$k' < K-1$}
     {$\mc J^{(i)}_{\bm 0} += q(\bm x^{(i)}(k'+1),k'+1)$} 
     \Else
     {$\mc J^{(i)}_{\bm 0} += \psi\left(\bm x^{(i)}(K),K)\right)$}  
  }
  }
  {
  $\displaystyle \widehat{\theta} \in \operatorname*{arg\,min}_{\theta > 0} {\gamma}{\theta} 
  + {\theta'}  \log\left(\frac{1}{M}\sum_{i=1}^{M} 
  \exp\left( \frac{\mc J^{(i)}_{\bm 0}}{\theta'}  \right)
  \right)
  $\label{alg:line11}
  \\
  \For {$ i\gets 1$ to $M$}{
   $\displaystyle r^{(i)} = \exp\left( \frac{\mc J^{(i)}_{\bm 0}}{\widehat{\theta}}  \right)\bigg/ \sum_{j=1}^M \exp\left( \frac{\mc J^{(j)}_{\bm 0} }{\widehat{\theta}} \right)$ 
   }
   $\displaystyle \bm u_k \gets \bm R^{-1}\bm G_{c}^\top (\bm G_{c} \bm R^{-1} \bm G_{c}^\top)^{-1}(\bm x(k),k)\sum_{i=1}^{M} r^{(i)} \frac{\bm  \Sigma_c \bm \varepsilon^{(i)}(k)}{\sqrt{\Delta t}}$ \\
  Send ${\bm u}(k)$ to the controller \\
  Update the current state ${\bm x}(k+1)$  
  }
  }
\caption{DRPI}\label{alg:DRPI}
\end{algorithm}
\subsection{Numerical Method}
The numerical implementation of the PIC requires two types of approximation, namely, time discretization and sampling trajectories. First,
we can approximate the continuous-time dynamic system \eqref{eq:dynamic} using the Euler-Maruyama method~\cite{kloeden1992stochastic}: 
\begin{align}
\label{eq:time_discretized_dynamic}
\begin{split}
\bm x(k+1)= &\bm x(k) + f(\bm x(k),k)\Delta t+ \bm G(\bm x(k),k)\bm u(\bm x(k),k)\Delta t\\ & 
+ \bm\Sigma(\bm x(k),k)\Delta\bm \xi(k)
\end{split}
\nonumber
\end{align}
for $k=0,1,\ldots,K-1$, where time step size $\Delta t>0$ determines the total number of time steps, i.e., $K= T /\Delta t$ and $\Delta\bm \xi(k)={\bm\mu}(k)\Delta t + \bm \varepsilon\sqrt{\Delta t}$ is the discrete-time diffusion process where $\bm \varepsilon\sim\mc N(\bm 0, \boldsymbol{\mathrm{I}})$.  
Subsequently, we can approximately compute the optimal control for the discrete-time version as we estimate the expectation in \eqref{eq:optimal_control_cont}. This approximation involves utilizing a collection of $M$ uncontrolled sample trajectories $\{\bm x^{(i)}(k)\}^{K }_{k=1}$ for $i=1,\ldots,M$ generated via the Monte-Carlo method as follows:
\begin{align*}
 &{\bm u}(\bm x(k),k) = \\& {\textstyle  \bm R^{-1}\bm G_{c}^\top (\bm G_{c} \bm R^{-1} \bm G_{c}^\top)^{-1}(\bm x(k),k)\frac{\sum_{i=1}^M \exp\left({\lambda'} \mc J^{(i)}_{\bm 0}\right) \bm \Sigma_c \frac{\bm \varepsilon}{\sqrt{\Delta t}}}{\sum_{j=1}^M \exp\left({\lambda'}\mc J^{(j)}_{\bm 0}\right)},}
\end{align*}
where $\mc J^{(i)}_{\bm 0} = \psi(\bm x^{(i)}(K),K) + \sum_{s=k}^{K-1} q(\bm x^{(i)}(s),s)\Delta t$ is a cost associated with the $i$-th trajectory $\{\bm x^{(i)}(s)\}_{s=k}^K$.

The proposed Distributionally Robust Path Integral (DRPI) Algorithm \ref{alg:DRPI} makes use of the risk-sensitive path integral control to compute the optimal value of the subproblem $g(\theta)$~\eqref{eq:subproblem} for any $\theta>0$. For the master problem~\eqref{eq:master_prolem}, we can employ various line search methods in line \ref{alg:line11} since it is an univariate optimization. Note that, in line \ref{alg:line11}, we reuse the costs $\mc J^{(i)}_{\bm 0}$ for $i=1,2,\ldots,M$, to optimize over $\lambda$ without sampling new trajectories. Hence, the scheme efficiently solves the master problem.
\begin{remark}
For the special case where the stochastic control problem is convex, $g(\theta)$ becomes convex since it is a partial minimization of the convex function over $\bm u$. 
In this case, we can solve the master problem in polynomial time by a binary search over $\theta$ while $g(\theta)$ is solved by the path integral method for each $\theta$.
If we further simplify the problem to the case where the dynamic is linear and the cost function is quadratic in both $\bm x$ and $\bm u$, the subproblem $g(\theta)$ becomes the well-known linear exponential-of-quadratic Gaussian (LEQG) control problem, which can be analytically solved by the modified backward Riccati equation~\cite{jacobson1973optimal}.
\end{remark}
\section{Numerical Experiments}\label{sec:experiment}
In this section, we demonstrate the effectiveness of our DRPI scheme. To reflect limited data availability, we start each simulation run with an initial estimate $\widehat{\bm \mu}$ using only a single data point $\Delta \bm \xi^{(1)}$. 
Here we set the maximum time step to  $K$. Then, for each time step  $k=1, 2, \ldots, K$, we refine $\widehat{\bm\mu}$ as we have access to a newly observed empirical disturbance term $\Delta \bm \xi^{(k)}$, as follows:
\begin{equation*}\label{eq:online}
\widehat{\bm\mu} \leftarrow \widehat{\bm\mu}  + \frac{\Delta \bm\xi^{(k)}/\Delta t  - \widehat{\bm\mu} }{k}, k=1,2,\ldots, K.
\end{equation*}
As our estimate $\widehat{\bm\mu}$ improves with the increasing number of available data, we adjust $\gamma \leftarrow  1/k$ for each time step $k=1,2,\ldots, K$, based on the theoretical results in Section~\ref{sec:OOS}.
We compare our DRPI control with the risk-neutral path integral control (PIC), i.e., DRPI with $\gamma$  fixed at 0 throughout the experiments:
as mentioned in Section~\ref{sec:DRO_A}, if $\gamma=0$, the ambiguity set \eqref{eq:ambiguity_set} only includes the empirical distribution, making the inner maximization disappear in \eqref{eq:dr_control}.
\subsection{Double Integrator Model}
Consider the following double integrator model for a particle robot in a 2D plane as follows:
\begin{equation}\label{eq:numerical_dynamic}
\begin{split}
\small
     \bm x(k+1) =\begin{bmatrix}  {p}_{x}(k+1) \\ {p}_{y}(k+1) \\ {h}(k+1) \\ {v}(k+1)  \end{bmatrix}& =  \bm x(k) +  \begin{bmatrix} 0 & 0 & 1 & 0 \\ 0 & 0 & 0 & 1 \\ 0 & 0 & 0 & 0 \\ 0 & 0 & 0 & 0 \end{bmatrix} \bm x(k)  \Delta t
    \\ 
    & + \begin{bmatrix} 0 & 0 \\ 0 & 0 \\ 1 & 0 \\ 0 & 1\end{bmatrix} \begin{bmatrix}
        u_h(k)\Delta t + \Delta\xi_h(k) \\ u_v(k)\Delta t + \Delta\xi_v(k)
    \end{bmatrix}.
\end{split}
\end{equation}
Here, the states ${p}_{x}(k)$ and ${p}_{y}(k)$ represent positions in the horizontal and vertical directions, while $h(k)$ and $v(k)$ represent linear velocities in the corresponding directions. The control inputs $u_h(k)$ and $u_v(k)$ denote the acceleration in the horizontal and the vertical direction, respectively. The true disturbance $\bm \xi(k) = [\xi_h(k), \: \xi_v(k)]^\top$ is a Brownian disturbance with unknown drift.
Given initial state $\bm x(0)=[-3.5,\; 2.5,\; 0.0,\; 0.0]^\top$, the objective is to design an optimal control policy $\bm u^*(\bm x(k),k)=[u^{*}_h(\bm x(k),k),\; u^{*}_v(\bm x(k),k)]^\top$ for the robot to arrive at the target position $[p_{x}^{*}(k), \; p_{y}^{*}(k)]=[0.0,\; 0.0]$ as soon as possible, while avoiding any collisions with both the inner square obstacle and the outer square boundary as shown in Figure \ref{fig:DI}(a).
\begin{figure}[htbp]
    \centering
    \begin{minipage}{0.24\textwidth} \label{fig:DIM}
        \centering
        \includegraphics[width=\textwidth]{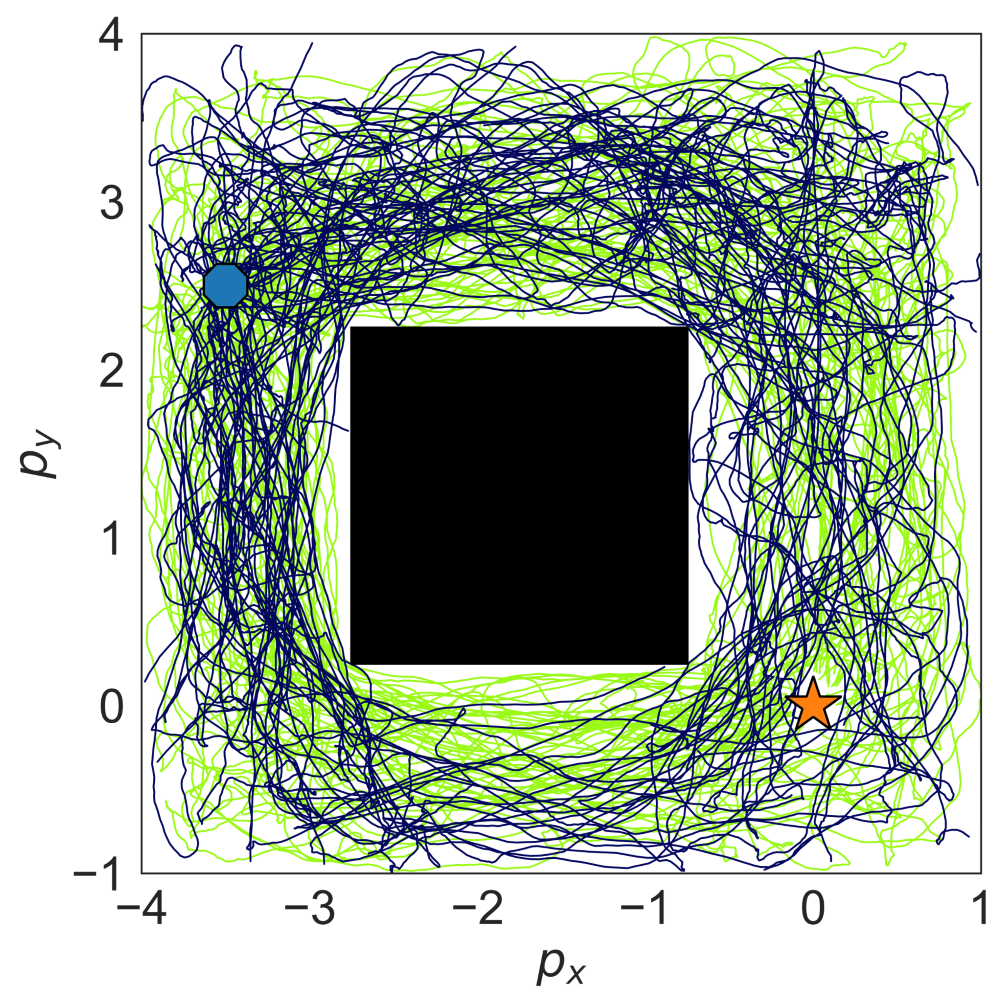}
        {\small \it PIC}
        \smallskip 
    \end{minipage}
    \hfill
    \begin{minipage}{0.24\textwidth}
        \centering
        \includegraphics[width=\textwidth]{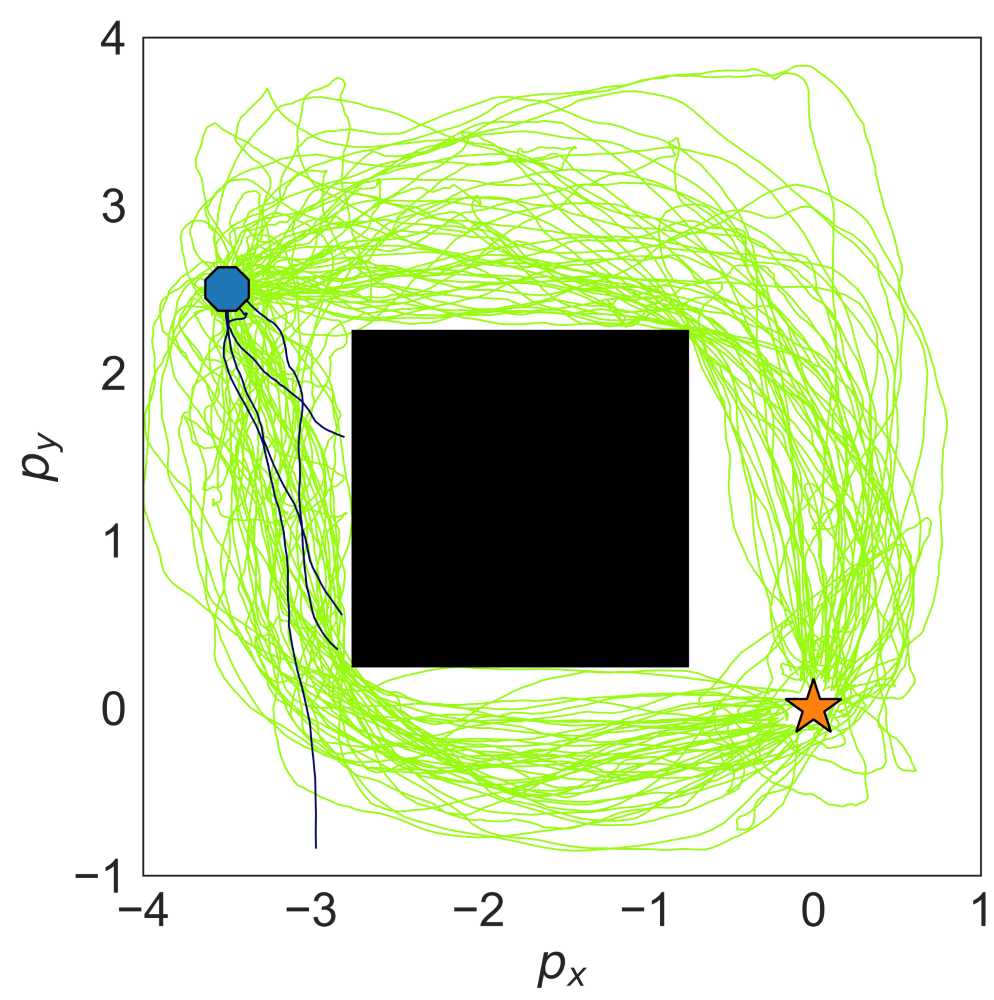}
        {\small \it DRPI}
        \smallskip 
    \end{minipage}
    Fig. 1(a) double integrator model
    \vspace{1em} 

    \begin{minipage}{0.24\textwidth}\label{fig:Uni}
        \centering
        \includegraphics[width=\textwidth]{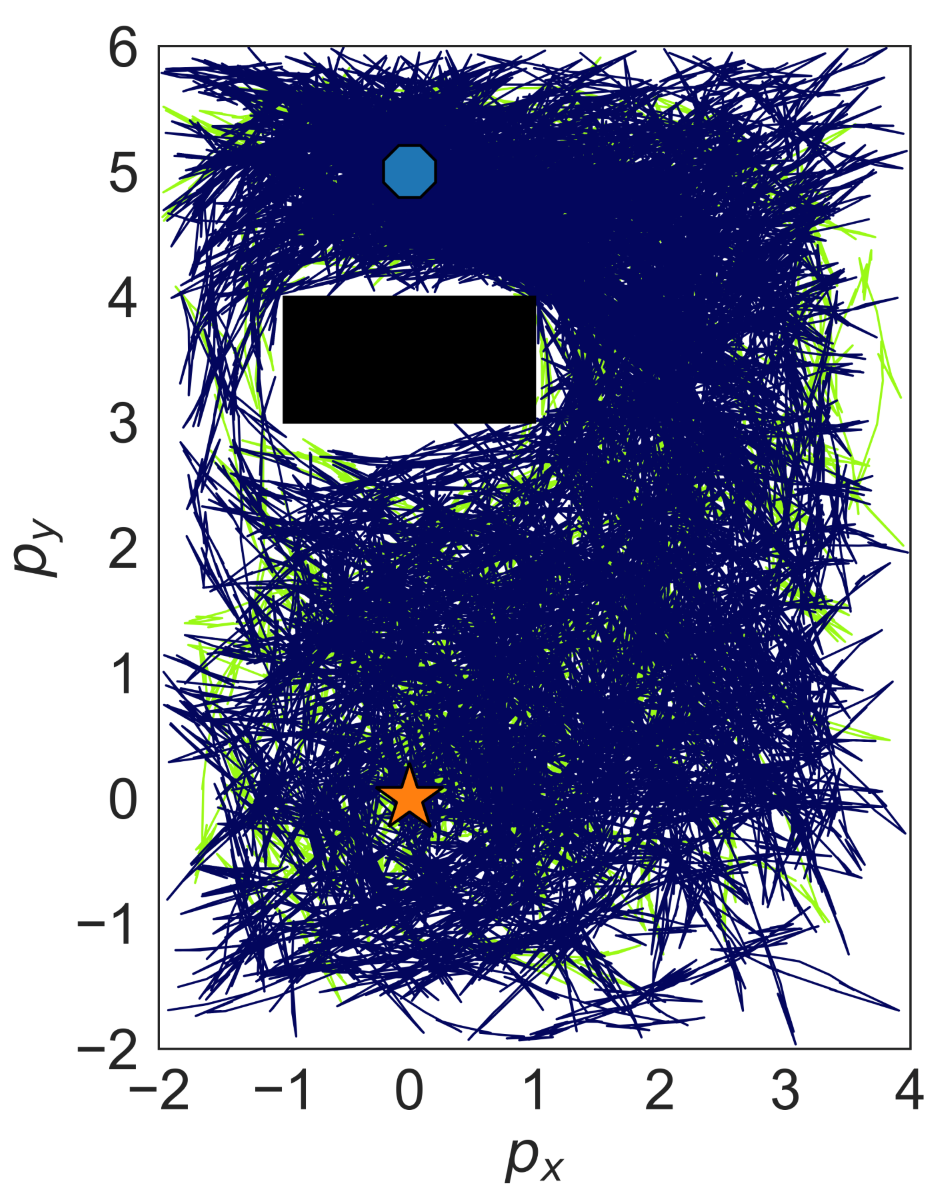}
        {\small \it PIC}
        \smallskip 
    \end{minipage}
    \hfill
    \begin{minipage}{0.24\textwidth}
        \centering
        \includegraphics[width=\textwidth]{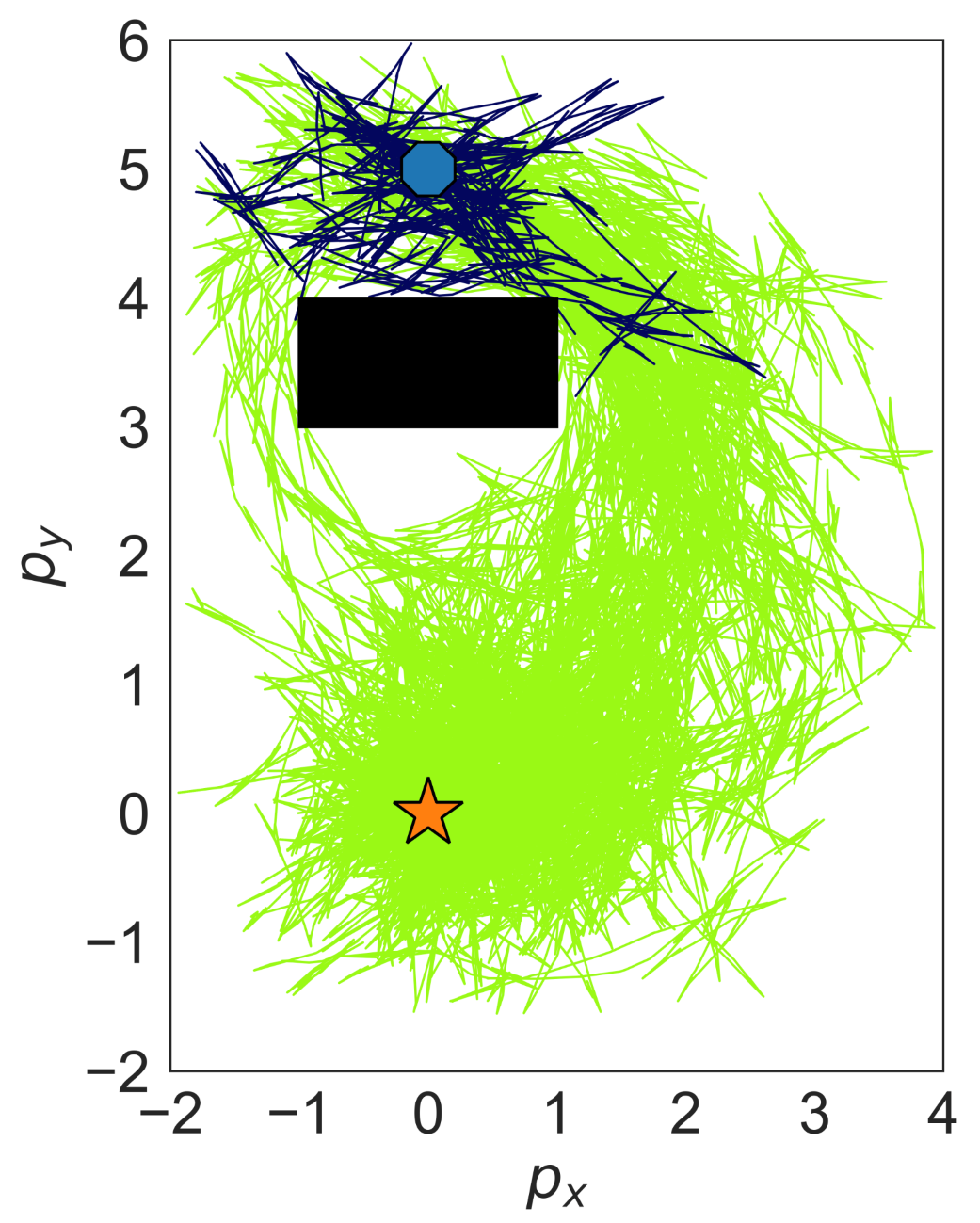}
        {\small \it DRPI}
        \smallskip 
    \end{minipage}
    Fig. 1(b) unicycle model

    \caption{In the context of robot navigation, two models are considered: (a) the double integrator model and (b) the unicycle model. The initial positions are represented by blue octagons, while the target positions are denoted by orange stars. The figure illustrates 100 trajectory simulations generated using the PIC schemes (left) and DRPI (right) for each model, under the online context with $\widehat{\bm\mu} \leftarrow \Delta \bm \xi^{(1)}$, which means $\widehat{\bm\mu}$ is estimated by a single initial data point, and initial $\gamma$ value of $1$ for each simulation. Trajectories are color-coded; deep blue trajectories collide with either inner or outer boundaries, whereas green trajectories represent successful simulations.}
    \label{fig:DI}
\end{figure}

\subsection{Unicycle Model}
The second model is an unicycle model of a robot that can move forward and change its orientation in the 2D plane, but cannot move directly sideways as follows:
\begin{equation}
\label{eq:unicycle_dynamic}
\begin{split}
\small
    \bm{x}(k+1) &= \begin{bmatrix}  {p}_{x}(k+1) \\ {p}_{y}(k+1) \\ {p}_{\theta}(k+1) \end{bmatrix} \\
    &= \bm{x}(k) + \begin{bmatrix} \cos(\theta) & 0 \\ \sin(\theta) & 0 \\ 0 & 1 \end{bmatrix} \begin{bmatrix}
        u_v(k)\Delta t + \Delta\xi_v(k) \\ u_{\omega}(k)\Delta t + \Delta\xi_{\omega}(k)
    \end{bmatrix}.
\end{split}
\end{equation}
Here, the states ${p}_{x}(k)$ and ${p}_{y}(k)$ represent positions in the horizontal and vertical direction, respectively, and ${p}_{\theta}(k)$ represent the orientation of the unicycle. Control inputs $u_v(k)$ and $u_{\omega}(k)$ represent forward velocity and the rate of change of the orientation, respectively. 

Similar to the first experiment, the true Brownian disturbance $\bm \xi(k) = [\xi_v(k), \: \xi_{\omega}(k)]^\top$ is unknown and the objective is to design an optimal control policy $\bm u^*(\bm x(k),k)=[u^{*}_v(\bm x(k),k),\; u^{*}_{\omega}(\bm x(k),k)]^\top$ for the robot to navigate to the target position $[p_{x}^{}(k), \; p_{y}^{}(k)]=[0.0,\; 0.0]$, without any collision as shown in Figure \ref{fig:DI}(b).

\subsection{Experimental Design}
 For both the double integrator and unicycle models, we consider quadratic control-dependent cost with the weight matrix $\bm R=10^{-3}\boldsymbol{\mathrm{I}}\in\mathbb{R}^{2\times 2}$ and a nonlinear state-dependent cost function $q(\bm x(k))$ as follows:
\begin{equation}\label{eq:double_integrator_cost}
    q\left(\bm x(k)\right)=c_1 \cdot \Vert \bm x(k) - \bm x^{*}\Vert_{2} + c_2 \cdot q_{_{o}}\left(\bm x(k)\right) + c_3 \cdot q_{_{b}}\left(\bm x(k)\right).
\end{equation}
Here, the obstacle cost $q_{{o}}(\cdot)$ and the boundary cost $q{_{b}}(\cdot)$ are indicator functions that take the value 1 if the robot hits the obstacle and the boundary, respectively. The coefficient parameters in equation \eqref{eq:double_integrator_cost} are set to $c_1 = 10^{-2}$, $c_2=c_3 = 10^2$. In the event of a collision with the inner or outer squares, the current simulation is immediately terminated and recorded as a failure. 

We conducted 100 simulations for each model using both DRPI and the PIC schemes, and recorded the trajectory for each simulation and the arrival times for the successful ones. The results summarized in Table \ref{tab1} and Figure \ref{fig:DI} show that the DRPI scheme offers significant advantages over the PIC scheme across both models, which include a higher success rate, faster arrive time on average, and more consistent and predictable performance due to lower variability. 

\begin{table}[htbp]
\caption{Performance of Different Schemes}
\begin{center}
\begin{tabular}{|c|c|c|c|c|c|}
\hline
\multirow{2}{*}{\textbf{Model}}  & \multirow{2}{*}{\textbf{Scheme}} & \multirow{2}{*}{\specialcell{\textbf{Success}\\\textbf{Rate (\%)}}}  & \multicolumn{3}{|c|}{\textbf{Arrive Time (s)}} \\
\cline{4-6} 
 & & & \textbf{Mean} & \textbf{Std. Dev.} & \textbf{95 pct.} \\
\hline
\multirow{2}{*}{\specialcell{Double\\Integrator}}  & PIC & 66 & 21.30 & 12.63 & 44.42 \\
 & DRPI & 96 & 8.04 & 2.25 & 12.32 \\
\hline
\multirow{2}{*}{\specialcell{Unicycle}}  & PIC & {19} & {25.89} & {11.13} & {44.52} \\
 & DRPI & {78} & {16.94} & {6.17} & {28.30} \\
\hline
\end{tabular}
\label{tab1}
\end{center}
\end{table}

\section{Conclusion}\label{sec:conclusion}
We introduced a novel data-driven approach to robustify policies for a broad class of stochastic control problems in the absence of the true diffusion process. Notably, we established an interesting connection between the DR control and the risk-sensitive control. Our theoretical results offered valuable insight into selecting the robust parameter, making simulator-based controllers more practical when dealing with unknown system dynamics. Furthermore, our proposed DRPI algorithm showcased outstanding performance. Future research avenues
may include exploring extensions of our framework to more
complex systems in various application domains.
\bibliographystyle{IEEEtran}
\bibliography{main}
\end{document}